\documentclass[12pt]{article}

\usepackage[utf8]{inputenc}
\usepackage[T1]{fontenc}
\usepackage[margin=1in]{geometry}
\usepackage{amsmath,amssymb,amsthm}
\usepackage{graphicx}
\usepackage{booktabs}
\usepackage{algorithm}
\usepackage{algorithmic}
\usepackage{natbib}
\usepackage{xcolor}
\usepackage{hyperref}
\usepackage{float}
\usepackage{setspace}

\newtheorem{assumption}{Assumption}
\newtheorem{proposition}{Proposition}

\newcommand{\mcMldidBiasC}{+0.381}
\newcommand{\mcMldidCateB}{0.236}
\newcommand{\mcMldidCateC}{1.946}
\newcommand{\mcMldidCateD}{1.308}

\newcommand{\mcMldidCovC}{0.33}
\newcommand{\mcNaiveCateC}{2.342}
\newcommand{\mcNaiveCateD}{1.061}
\newcommand{\mcStagBiasC}{-0.005}
\newcommand{\mcStagCateB}{0.480}
\newcommand{\mcStagCateC}{1.887}
\newcommand{\mcStagCateD}{0.837}
\newcommand{\mcStagCovC}{1.00}
\newcommand{\mcTwfeBiasC}{-1.249}

\newcommand{\medATT}{-2.25}
\newcommand{\medATThi}{-1.12}
\newcommand{\medATTlo}{-3.10}

\newcommand{\medCateMean}{-1.90}
\newcommand{\medCateSd}{1.17}

\newcommand{\medCorrInc}{0.46}
\newcommand{\medCorrPov}{-0.55}

\newcommand{\medImpInc}{0.42}
\newcommand{\medImpPov}{0.39}
\newcommand{\medImpRegion}{0.19}

\newcommand{\mpATT}{-0.040}
\newcommand{\mpATThi}{-0.015}
\newcommand{\mpATTlo}{-0.064}

\graphicspath{{figures/}}

\title{A Fixed-Effects Causal Forest for Staggered Adoption,
with an Application to Medicaid Expansion}
\author{Harry Aytug\thanks{Amazon Web Services. Email: haytug@amazon.com.}}
\date{\today}

\begin{document}

\maketitle

\begin{abstract}
Difference-in-differences with staggered adoption identifies group-time average
treatment effects $\mathrm{ATT}(g,t)$ by comparing each cohort to units not yet
treated, which avoids the ``forbidden comparisons'' that bias two-way
fixed-effects estimators when effects are heterogeneous. This paper studies the
covariate-\emph{conditional} version of that object, $\tau_{g,t}(x)$, and
estimates it with a \emph{fixed-effects causal forest}. Within each $(g,t)$
comparison block, the outcome and treatment are residualized on unit and period
fixed effects inside each tree node, and honest causal trees split on
treatment-effect heterogeneity in the covariates. The estimand is not new.
\citet{hatamyar2023mldid} introduced it using a doubly-robust R-learner, and
\citet{imai2023doubly} study it for a single continuous covariate. What we add is
a different way to estimate it. Where those methods remove confounding by
modeling nuisance functions, we remove it by differencing out unit and period
effects within each tree node, following the fixed-effects residualization of
\citet{kattenberg2023causal} and \citet{gavrilova2025did} and carrying it into
the Callaway--Sant'Anna group-time structure. In Monte Carlo experiments the
estimator is the only forest-based method that stays unbiased and correctly
covered for the overall effect under staggered timing with cohort-varying
effects; two-way fixed effects and a pooled causal forest inherit large
forbidden-comparison bias. We apply the method to the \citet{callaway2021difference}
minimum-wage panel as a validation and to the staggered county-level rollout of
the ACA Medicaid expansion, where it recovers an average \medATT{}
percentage-point fall in the uninsured rate and a conditional surface on which
poorer and lower-income counties gained substantially more coverage---
heterogeneity measured along socioeconomic covariates that are not lags of the
outcome.

\medskip
\noindent\textbf{Keywords:} difference-in-differences; staggered adoption; causal
forests; heterogeneous treatment effects; fixed effects; group-time treatment
effects.

\medskip
\noindent\textbf{JEL codes:} C14; C21; C23; I13; J38.
\end{abstract}

\section{Introduction}

Difference-in-differences (DiD) with staggered treatment timing is among the
most widely used research designs in empirical economics. A recent literature
has shown that the workhorse two-way fixed-effects (TWFE) regression can be
badly biased in this setting when treatment effects are heterogeneous across
adoption cohorts or over time: the TWFE coefficient is a variance-weighted
average of all possible $2\times2$ comparisons, including ``forbidden''
comparisons that use already-treated units as controls, and these can enter with
negative weights \citep{goodman2021difference, dechaisemartin2020two,
sun2021estimating}. \citet{callaway2021difference} (hereafter CS) resolve this by
defining the target as a collection of \emph{group-time average treatment
effects} $\mathrm{ATT}(g,t)$, the effect at period $t$ for the cohort first
treated at $g$. Each $\mathrm{ATT}(g,t)$ is identified from a clean comparison of
that cohort against units not yet treated as of $t$, and the estimates are then
aggregated with transparent weights.

Two limitations motivate this paper. First, $\mathrm{ATT}(g,t)$ is an
\emph{average} over the treated cohort; it does not describe how the effect
varies with observed characteristics. In most applications, though, the question
that matters for policy is which places, firms, or people respond most. Second,
the causal-forest literature that estimates such
conditional average treatment effects (CATEs) nonparametrically
\citep{wager2018estimation, athey2019generalized} was developed under
unconfoundedness and does not, out of the box, respect the parallel-trends
identification or the not-yet-treated comparison structure that staggered DiD
requires.

We study the covariate-conditional group-time effect
\begin{equation}
\tau_{g,t}(x) = \mathbb{E}\!\left[\,Y_t(g) - Y_t(\infty) \mid G = g,\; X = x\,\right],
\qquad t \ge g,
\end{equation}
the CS estimand made heterogeneous in pre-treatment covariates $x$. We propose
to estimate it with a \emph{fixed-effects causal forest} that operates block by
block. For each cohort $g$ and post-period $t \ge g$ we form the same clean
two-group, two-period panel that CS use: cohort $g$ against its not-yet-treated
(or never-treated) controls, observed at the reference period $g-1$ and the
evaluation period $t$. On that block we fit a causal forest that, inside each
tree node, residualizes the outcome and treatment on unit and period fixed
effects before searching for splits that maximize heterogeneity in the treatment
effect. The block-level surfaces are aggregated into event-study and overall
summaries with CS-style weights.

We are explicit about what is and is not new. The estimand $\tau_{g,t}(x)$ was
introduced by \citet{hatamyar2023mldid}, whose MLDID estimator combines the
doubly-robust DiD-CATE of \citet{lu2019estimating} with the CS staggered
framework and not-yet-treated controls. \citet{imai2023doubly} study the same
object and provide doubly-robust uniform confidence bands, but for a single
continuous covariate estimated by local polynomials. We do not claim to have
introduced conditional group-time effects, nor the combination of machine
learning with staggered DiD. What we contribute is a different way to estimate
the object. MLDID removes confounding by modeling nuisance functions, an outcome
regression and generalized propensity scores fit with an R-learner. We instead
difference out unit and period effects within each tree node, using the
fixed-effects residualization of \citet{kattenberg2023causal} and
\citet{gavrilova2025did}. Those authors developed that device for a \emph{common}
adoption date---a single treatment cohort, with event-study dynamics measured
against a fixed base period---rather than for staggered timing; we carry it into
the Callaway--Sant'Anna group-time blocks, so that each cohort is compared only to
units not yet treated and effects may differ across adoption cohorts. The two
approaches are alternatives rather than a re-derivation, and
Section~\ref{sec:mc} benchmarks them directly. Under timing
heterogeneity our estimator has lower bias and valid coverage, while the
doubly-robust engine can be more efficient on the conditional surface when
heterogeneity is smooth in covariates.

One caveat shapes how we present the results. Because each block is a clean
two-group within-fixed-effects comparison, the overall (covariate-averaged)
effect our estimator reports is numerically identical, by construction, to the CS
aggregate. The forest adds value only on the conditional object $\tau_{g,t}(x)$
and its event-study aggregate. We therefore lead with the conditional surface and
treat the scalar as a check against CS rather than as a contribution.

In Monte Carlo experiments across four data-generating processes we benchmark the
estimator against TWFE, the CS estimator, a naively pooled fixed-effects causal
forest, and MLDID. Under homogeneous effects, and under effects that vary only in
covariates, all of the estimators are approximately unbiased on the overall
effect. The differences appear when the effect varies in \emph{timing}, with an
early cohort that gains more and a late cohort that gains less. There TWFE and the
pooled forest inherit large forbidden-comparison bias, roughly $-1.25$ on a true
effect of comparable size and with zero confidence-interval coverage, while our
estimator and CS remain close to unbiased with valid coverage. On the conditional
surface, our estimator has the lowest error of the forest-based methods under
homogeneous, timing-heterogeneous, and nonlinear heterogeneity, and improves on
the pooled forest even under smooth covariate heterogeneity; only MLDID's
penalized-linear learner does better in the one design where the true surface is
linear in the covariates. We then apply the method to the ACA Medicaid expansion.
Across the staggered county-level rollout it recovers an average \medATT{}
percentage-point effect on the county uninsured rate, and a conditional surface on
which poorer, lower-income counties gained the most coverage---heterogeneity
measured along covariates that are not lags of the outcome.

The rest of the paper proceeds as follows. Section~\ref{sec:setup} sets up the
panel and the estimand. Section~\ref{sec:id} states the identifying assumptions
and defines the estimator. Section~\ref{sec:mc} reports the Monte Carlo evidence,
and Section~\ref{sec:app} the applications. Section~\ref{sec:conc} concludes and
discusses the limitations.

\section{Setup and estimand}
\label{sec:setup}

We observe a panel of units $i = 1,\dots,N$ over periods $t = 1,\dots,T$. For
each unit-period we observe an outcome $Y_{it}$, a binary treatment indicator
$D_{it} \in \{0,1\}$, and a vector of time-invariant pre-treatment covariates
$X_i \in \mathbb{R}^p$.

\paragraph{Staggered, absorbing treatment.}
Treatment is an absorbing state: once treated, a unit remains treated. Each
unit's adoption time (cohort) is
$G_i = \min\{t : D_{it} = 1\}$, with $G_i = \infty$ for never-treated units, so
that $D_{it} = \mathbf{1}\{t \ge G_i\}$. Let $\mathcal{G}$ denote the set of
observed finite adoption cohorts.

\paragraph{Potential outcomes.}
Following CS, we index potential outcomes by cohort: $Y_{it}(g)$ is the outcome
unit $i$ would realize at $t$ had it first adopted at $g$, and $Y_{it}(\infty)$
is the never-treated potential outcome. The observed outcome is
\begin{equation}
Y_{it} = Y_{it}(\infty)
+ \sum_{g \in \mathcal{G}} \mathbf{1}\{G_i = g\}\big[Y_{it}(g) - Y_{it}(\infty)\big].
\end{equation}
Effects may depend on the cohort $g$, on calendar time $t$, and on the
covariates $X_i$, which are the object of this paper.

\paragraph{Estimand.}
The target is the group-time conditional average treatment effect on the
treated,
\begin{equation}
\boxed{\;\tau_{g,t}(x) = \mathbb{E}\!\left[\,Y_{it}(g) - Y_{it}(\infty)
\mid G_i = g,\; X_i = x\,\right], \qquad t \ge g.\;}
\end{equation}
This is $\mathrm{ATT}(g,t)$ conditioned on $X_i = x$. Reported objects are
weighted averages of it. The \emph{conditional event study} at exposure
$e = t - g$ is
\begin{equation}
\tau^{\mathrm{es}}_e(x) = \sum_{g \in \mathcal{G}} w_g^{(e)}\, \tau_{g,\,g+e}(x),
\end{equation}
with cohort-share weights $w_g^{(e)}$ restricted to cohorts observed at exposure
$e$; the \emph{overall} conditional effect $\tau^{\mathrm{overall}}(x)$ averages
$\tau_{g,t}(x)$ over post-treatment $(g,t)$ blocks with treated-observation-share
weights, and the scalar overall ATT is its covariate average over the treated.

\section{Identification and estimator}
\label{sec:id}

\subsection{Identifying assumptions}

\begin{assumption}[Conditional parallel trends, not-yet-treated]
\label{ass:pt}
For every $g \in \mathcal{G}$, every $t \ge g$, and reference period $g-1$,
\[
\mathbb{E}[\,Y_{it}(\infty) - Y_{i,g-1}(\infty) \mid G_i = g, X_i = x\,]
= \mathbb{E}[\,Y_{it}(\infty) - Y_{i,g-1}(\infty) \mid G_i \in \mathcal{C}_{g,t}, X_i = x\,],
\]
where the comparison set $\mathcal{C}_{g,t} = \{g' : g' > t\}$ collects units not
yet treated as of $t$ (optionally restricted to never-treated units).
\end{assumption}

\begin{assumption}[No anticipation]
\label{ass:na}
For treated units in cohort $g$, $Y_{it}(g) = Y_{it}(\infty)$ for all $t < g$.
\end{assumption}

\begin{assumption}[Within-block overlap]
\label{ass:overlap}
For each $(g,t)$ and each covariate region used by the forest, both cohort $g$
and the comparison set $\mathcal{C}_{g,t}$ occur with positive probability.
\end{assumption}

Assumption~\ref{ass:pt} is the CS conditional parallel-trends condition imposed
cell by cell. Assumption~\ref{ass:overlap} is more demanding than the overlap
condition of a pooled causal forest: a late cohort with no contemporaneous
not-yet-treated controls in some covariate region is not identified there, and
our estimator reports such blocks as dropped rather than silently extrapolating.
Under Assumptions~\ref{ass:pt}--\ref{ass:overlap}, for $t \ge g$,
\begin{equation}
\tau_{g,t}(x) =
\mathbb{E}[Y_{it} - Y_{i,g-1} \mid G_i = g, X_i = x]
- \mathbb{E}[Y_{it} - Y_{i,g-1} \mid G_i \in \mathcal{C}_{g,t}, X_i = x],
\label{eq:idresult}
\end{equation}
a conditional-on-$x$ difference-in-differences between the cohort and its valid
controls, anchored at the pre-adoption reference $g-1$.

\subsection{The estimator}

The estimator marries the not-yet-treated comparison structure to a
fixed-effects causal forest. For each $(g,t)$ with $t \ge g$:
\begin{enumerate}
\item \textbf{Restrict to a clean block.} Keep cohort-$g$ units and their valid
controls $\mathcal{C}_{g,t}$, at periods $\{g-1, t\}$. Already-treated units of
other cohorts are excluded, which is what removes the forbidden comparison.
\item \textbf{Grow honest causal trees with node-level FE residualization.}
Within each node $\ell$, residualize using only block observations,
$\tilde Y^{(\ell)}_{it} = Y_{it} - \hat\alpha^{(\ell)}_i - \hat\gamma^{(\ell)}_t$
and $\tilde D^{(\ell)}_{it} = D_{it} - \hat\pi^{(\ell)}_i - \hat\rho^{(\ell)}_t$,
by iterative two-way demeaning, then split to maximize heterogeneity in the
local treatment effect and estimate leaf effects honestly on a held-out
subsample.
\item \textbf{Aggregate.} Combine the block surfaces $\hat\tau_{g,t}(x)$ into the
event-study and overall objects with the CS-style weights of
Section~\ref{sec:setup}.
\end{enumerate}

Algorithm~\ref{alg:scffe} states the full procedure.

\begin{algorithm}[htbp]
\caption{Staggered fixed-effects causal forest}
\label{alg:scffe}
\begin{algorithmic}[1]
\REQUIRE panel $\{(Y_{it}, D_{it}, X_i, \text{unit}\ i, \text{period}\ t)\}$;
control rule (not-yet-treated or never-treated)
\STATE infer cohorts $G_i = \min\{t : D_{it}=1\}$ ($G_i=\infty$ if never treated)
\FOR{each cohort $g \in \mathcal{G}$ and period $t \ge g$}
  \STATE reference $\gets g-1$; controls
         $\mathcal{C}_{g,t} \gets \{i : G_i > t\}$ (or $\{i : G_i = \infty\}$)
  \STATE form block: cohort-$g$ units $\cup\ \mathcal{C}_{g,t}$ at periods
         $\{g-1, t\}$
  \IF{cohort or control group empty in block}
    \STATE record $(g,t)$ as dropped (overlap failure); skip
  \ELSE
    \STATE grow honest causal trees on the block; \emph{within each node},
           residualize $Y,D$ on unit and period FE by iterative two-way
           demeaning, split to maximize $\tau$-heterogeneity in $X$, estimate
           leaf effects on the held-out honest subsample
           $\Rightarrow \hat\tau_{g,t}(\cdot)$
  \ENDIF
\ENDFOR
\STATE \textbf{aggregate:} event study
       $\hat\tau^{\mathrm{es}}_e(x)=\sum_g w_g^{(e)}\hat\tau_{g,g+e}(x)$;
       overall ATT via treated-share weights
\STATE \textbf{inference:} unit-level block bootstrap over the whole procedure
       (Section~\ref{sec:inference})
\end{algorithmic}
\end{algorithm}

The engine distinction from MLDID lives in step 2: where MLDID forms a
doubly-robust pseudo-outcome from separately estimated nuisance functions and
runs an R-learner, we difference out unit and period effects \emph{within the
node} and read the local effect off the residualized regression. Neither
requires modeling the propensity of treatment timing; both target
\eqref{eq:idresult}.

It is worth spelling out why a naively pooled forest is biased here.
Applying an ordinary fixed-effects causal forest to the full staggered panel,
pooling all periods and cohorts into one node-level residualization, reintroduces
the forbidden comparison. Within a node the pooled within-FE coefficient
is, by the \citet{goodman2021difference} decomposition, a variance-weighted
average of every $2\times2$ comparison including late-versus-early terms in which
an already-treated cohort serves as the control; when effects grow with exposure
those terms enter with negative weight. Node-level residualization localizes but
does not remove this: each covariate cell inherits its own contaminated
weighting. The block construction removes the offending comparisons.
We do not derive a closed-form expression for this bias; its sign and magnitude
are established quantitatively in Section~\ref{sec:mc}.

\subsection{Consistency}

Within a block, our estimator is an ordinary causal forest applied to
fixed-effects-residualized data, which lets us borrow the consistency theory of
\citet{wager2018estimation} block by block, following the reduction argument of
\citet{gavrilova2025did}. Fix a block $(g,t)$ and let
$\tilde Y_{i} = Y_{it} - Y_{i,g-1}$ be the base-period-differenced outcome for
units in cohort $g$ and its comparison set $\mathcal{C}_{g,t}$, with $W_i$ the
cohort-$g$ membership indicator.

\begin{assumption}[Block regularity]
\label{ass:reg}
Within each block: (i) covariates $X_i$ are continuously distributed on a bounded
support; (ii) $\mathbb{E}[\tilde Y \mid X=x, W=w]$ is Lipschitz in $x$ for
$w\in\{0,1\}$; (iii) conditional second moments of $\tilde Y$ are bounded; and
(iv) observations are independent across units (with dependence within a unit
handled by the unit-level bootstrap of Section~3.4).
\end{assumption}

\begin{proposition}[Block consistency]
\label{prop:consistency}
Under Assumptions~\ref{ass:pt}--\ref{ass:reg}, the honest fixed-effects causal
forest fit on block $(g,t)$ yields a pointwise consistent and asymptotically
normal estimator of $\tau_{g,t}(x)$ as the number of block units grows.
\end{proposition}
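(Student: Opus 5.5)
The plan is a reduction argument. First I show that, inside a single block $(g,t)$, node-level two-way fixed-effects residualization is algebraically the same as working with the base-period difference $\tilde Y_i = Y_{it}-Y_{i,g-1}$ and the cohort indicator $W_i$. That turns the block forest into an ordinary honest causal forest on a cross-section. I then invoke \citet{wager2018estimation} and finish with the identification result \eqref{eq:idresult}.

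\textbf{Step 1: algebraic reduction.} The key fact is that each block is a balanced two-period panel with periods $\{g-1,t\}$. Cohort-$g$ units have $D=0$ at $g-1$ and $D=1$ at $t$. Controls in $\mathcal{C}_{g,t}$ have $D=0$ in both periods.

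With two periods and unit effects, iterative two-way demeaning converges in one pass on any node $\ell$ whose units are observed in both periods. The residuals are then $\pm\frac12$ times the within-node demeaned first differences:
\[
\tilde Y^{(\ell)}_{i,t} = \tfrac12\bigl(\Delta Y_i - \overline{\Delta Y}^{(\ell)}\bigr),
\qquad
\tilde D^{(\ell)}_{i,t} = \tfrac12\bigl(W_i - \overline{W}^{(\ell)}\bigr),
\]
where $\Delta Y_i=\tilde Y_i$, and the period-$(g-1)$ residuals are the negatives of these. Nodes must keep both observations of a unit together; I will state that splits are on time-invariant $X_i$, so this holds automatically.

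It follows that the residualized node regression coefficient equals
\[
\frac{\sum_{i\in\ell}(W_i-\overline W)(\tilde Y_i-\overline{\tilde Y})}{\sum_{i\in\ell}(W_i-\overline W)^2},
\]
the difference in mean $\tilde Y$ between treated and comparison units in the leaf. This is exactly the leaf estimator and splitting criterion of a causal tree with outcome $\tilde Y$ and binary treatment $W$. So the fixed-effects forest on the block coincides, tree by tree, with a standard honest causal forest on the cross-section $(\tilde Y_i, W_i, X_i)$. This is the \citet{gavrilova2025did} reduction, now applied within a Callaway--Sant'Anna block.

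\textbf{Step 2: apply Wager--Athey.} On the cross-section $(\tilde Y_i,W_i,X_i)$, the Wager--Athey theorem (their Theorem 1) gives $\hat\tau(x)\to\tau^\ast(x)$ and asymptotic normality. Here
\[
\tau^\ast(x)=\mathbb{E}[\tilde Y\mid X=x,W=1]-\mathbb{E}[\tilde Y\mid X=x,W=0].
\]
The theorem needs the following inputs, each supplied as indicated:
\begin{itemize}
\item independent units, from Assumption~\ref{ass:reg}(iv);
\item $X$ with density on bounded support, from Assumption~\ref{ass:reg}(i);
\item Lipschitz conditional means, from Assumption~\ref{ass:reg}(ii);
\item bounded second moments, from Assumption~\ref{ass:reg}(iii);
\item overlap of $W$ bounded away from $0$ and $1$, from Assumption~\ref{ass:overlap};
\item a forest built from honest, $\alpha$-regular, symmetric, random-split trees with subsample size $s\asymp n^\beta$ in their admissible range, which I take as part of the estimator's specification.
\end{itemize}
Their unconfoundedness assumption is used only to equate $\tau^\ast$ with a causal effect. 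I therefore do not need it here: the statistical convergence to $\tau^\ast$ holds regardless of whether $\tau^\ast$ has a causal interpretation.

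\textbf{Step 3: identification.} Equation \eqref{eq:idresult} holds under Assumptions~\ref{ass:pt}--\ref{ass:na}, and it states precisely that $\tau^\ast(x)=\tau_{g,t}(x)$. Combining this with Step 2 gives the proposition.

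\textbf{Main obstacle.} I expect the hardest part to be the regularity conditions on the forest procedure, not the algebra in Step 1. Two issues need care.

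First, Wager--Athey require overlap as a uniform bound $\varepsilon\le P(W=1\mid x)\le 1-\varepsilon$. Assumption~\ref{ass:overlap} is stated only as positivity per region. I will strengthen it to a uniform bound, or read it that way, and note the strengthening explicitly.

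Second, the in-node demeaning uses node-specific means $\overline W^{(\ell)}$ and $\overline{\tilde Y}^{(\ell)}$. This is exactly the centering inside the causal tree estimator, so honesty is preserved as long as these means are computed on the estimation half. I will verify that the implementation computes them on the honest subsample. If it does not, I would fall back to the weaker form of the theorem, where centering is done on the splitting sample and is not used by the leaf estimate.
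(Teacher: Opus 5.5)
Your proposal is the paper's own argument. You reduce the two-period node-level demeaning to base-period differencing so that the block forest becomes an ordinary honest causal forest on $(\tilde Y_i, W_i, X_i)$ as in \citet{gavrilova2025did}, invoke \citet{wager2018estimation} for consistency and normality, and identify the limit with $\tau_{g,t}(x)$ through \eqref{eq:idresult}. Your explicit algebra, your choice to target the difference of observed conditional means instead of asserting (as the sketch does) that parallel trends makes $W$ unconfounded for $\tilde Y$, and your flag that overlap must be made uniform all sharpen that same route; like the paper, though, you treat Assumption~\ref{ass:reg} as supplying all of Wager--Athey's regularity, whereas they also need $(2+\delta)$-th moments, a positive conditional variance, Lipschitz second moments, and a covariate density bounded away from zero and infinity.
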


\begin{proof}[Sketch]
By the identification result \eqref{eq:idresult}, $\tau_{g,t}(x)$ equals the
conditional difference-in-differences in $\tilde Y$ between cohort $g$ and its
valid controls. Node-level two-way fixed-effects residualization on the two-period
block is, for a two-period panel, algebraically the base-period differencing that
maps $(Y_{it}, Y_{i,g-1})$ to $\tilde Y_i$; the forest thus operates on $\tilde
Y$ with treatment $W$, and under Assumption~\ref{ass:pt} $W$ is unconfounded for
$\tilde Y$ given $X$. Assumption~\ref{ass:reg} is the \citet{wager2018estimation}
regularity condition on the transformed problem, and honesty is imposed by
construction. Theorem~1 and Theorem~3 of \citet{wager2018estimation} then deliver
pointwise consistency and asymptotic normality for the block CATE, exactly as in
Proposition~1 of \citet{gavrilova2025did}. Aggregation across blocks with the
fixed weights of Section~\ref{sec:setup} preserves consistency by the continuous
mapping theorem.
\end{proof}

Two caveats bound this result, both shared with the causal-forest DiD literature.
It is a \emph{block-wise} statement under the block-overlap Assumption~\ref{ass:overlap};
it does not by itself establish valid coverage for the pooled cross-block surface
under finite cohorts, for which we rely on the bootstrap below. And it inherits
the finite-sample attenuation that motivates the recentering and orthogonalization
devices discussed by \citet{athey2019generalized} and \citet{gavrilova2025did};
we address the level through recentering and quantify residual bias by simulation.
The honesty invoked here is an asymptotic device: the consistency and normality
guarantee is for the honest forest as block size grows. In finite two-period
blocks with only a few dozen units, honest sample-splitting can attenuate the
surface more than the bias it removes is worth, which is why the implementation
splits honestly only in blocks large enough to afford it. Table~\ref{tab:honesty}
shows the finite-sample surface comparison is robust to this choice, so nothing in
the empirical results turns on trading the asymptotic guarantee for finite-sample
accuracy in small blocks.

\subsection{Inference}
\label{sec:inference}
We report inference for both objects. For the overall effect we use a unit-level
block bootstrap: resample whole unit series, refit the entire pipeline, and
recompute the aggregate. This captures sampling variability including that of the
cohort weights, and it is our primary interval; in the state-assigned application
of Section~\ref{sec:app} we resample at the treatment-assignment (state) level.
For the conditional surface we report a sampling-based pointwise interval from
the same block bootstrap, refitting the estimator on each resample and
recomputing the fitted surface $\hat\tau(x)$ at the evaluation points, then taking
pointwise percentile bands across draws. These bands reflect sampling uncertainty
in $\tau_{g,t}(x)$. They are wider than the forest half-sample \emph{stability}
bands, which measure only sensitivity to which trees were grown. Because recentering shifts
with each resample, we report the bands around the bootstrap distribution of the
recentered surface. This is the pointwise analogue, for the multivariate forest,
of the uniform bands \citet{imai2023doubly} derive for a single continuous
covariate.

\section{Monte Carlo evidence}
\label{sec:mc}

We simulate staggered panels and benchmark five estimators: TWFE-OLS; the CS
group-time estimator with not-yet-treated controls (average aggregation); a
naively pooled fixed-effects causal forest; MLDID; and the staggered fixed-effects
causal forest of this paper. Four data-generating processes hold fixed a
never-treated control group and staggered cohorts, and vary how the effect
depends on covariates and timing: (a) a homogeneous constant effect; (b) an
effect heterogeneous in covariates but constant across cohorts and exposure; (c)
an effect heterogeneous in \emph{timing}, in which earlier cohorts gain more and
the effect grows with exposure; and (d) an effect that is nonlinear in and
interacts the covariates (a threshold-plus-quadratic surface). Case (c) generates
forbidden-comparison bias; case (d) is the regime a forest should recover better
than a linear learner. Covariates are time-invariant at the unit level. For each
DGP we report, over 200 replications, the bias, root-mean-squared error, and 95\%
confidence-interval coverage of the overall ATT. For the forest and DML
estimators we also report the root-mean-squared error of the fitted conditional
surface against the true per-unit effect (CATE-RMSE). The two block-bootstrap
estimators use the same unit-level bootstrap, so their coverage is comparable.

Within each staggered comparison block the forest uses honest sample-splitting
only when the block is large enough to afford it (at least a few hundred
observations); on the small two-period blocks typical of staggered panels, honest
splitting would estimate leaf effects on a handful of observations and attenuate
the surface, so those blocks are fit without it. In the large blocks of the
Medicaid application all blocks clear the threshold and honest splitting is used
throughout. This adaptive rule is not a tuned knob on which the main result
rests: Table~\ref{tab:honesty} sweeps the honesty regime and shows that under
timing heterogeneity (DGP~c) the staggered forest recovers the surface far better
than the pooled forest at \emph{every} setting, whether honesty is forced on,
forced off, or selected by block size. The clean-block construction, not the
honesty setting, is what drives the advantage there.

\begin{table}[htbp]
\centering\footnotesize
\caption{Monte Carlo performance under staggered adoption (200 replications). Bias, RMSE and 95\% CI coverage are for the overall ATT; CATE-RMSE is the error of the fitted conditional surface $\tau_{g,t}(x)$ over treated observations (forest / DML estimators only). Monte Carlo standard errors (across replications) are in parentheses for the mean quantities (bias and CATE-RMSE). Under timing heterogeneity (DGP~c) the staggered forest is the only \emph{forest}-based estimator with low bias and valid coverage on the overall effect, and it attains the lowest CATE-RMSE on the conditional surface.}
\label{tab:mc}
\setlength{\tabcolsep}{4pt}
\begin{tabular}{llrrrr}
\toprule
DGP & Estimator & Bias (se) & RMSE & Coverage & CATE-RMSE (se) \\
\midrule
(a) homogeneous & TWFE-OLS & -0.002 (0.002) & 0.035 & 0.97 & -- \\
 & Callaway--Sant'Anna & +0.001 (0.003) & 0.043 & 0.96 & -- \\
 & naive CFFE & -0.002 (0.002) & 0.035 & 0.97 & 0.169 (0.001) \\
 & MLDID & +0.179 (0.005) & 0.195 & 0.53 & 0.208 (0.005) \\
 & staggered CFFE & +0.001 (0.003) & 0.043 & 0.85 & 0.141 (0.002) \\
\midrule
(b) heterogeneous in $x$ & TWFE-OLS & -0.002 (0.003) & 0.042 & 1.00 & -- \\
 & Callaway--Sant'Anna & +0.001 (0.003) & 0.043 & 1.00 & -- \\
 & naive CFFE & -0.002 (0.003) & 0.042 & 1.00 & 0.589 (0.004) \\
 & MLDID & +0.154 (0.009) & 0.198 & 0.80 & 0.236 (0.006) \\
 & staggered CFFE & +0.001 (0.003) & 0.043 & 1.00 & 0.480 (0.004) \\
\midrule
(c) heterogeneous in timing & TWFE-OLS & -1.249 (0.007) & 1.253 & 0.00 & -- \\
 & Callaway--Sant'Anna & -0.005 (0.003) & 0.041 & 1.00 & -- \\
 & naive CFFE & -1.249 (0.007) & 1.253 & 0.00 & 2.342 (0.009) \\
 & MLDID & +0.381 (0.008) & 0.399 & 0.33 & 1.946 (0.005) \\
 & staggered CFFE & -0.005 (0.003) & 0.041 & 1.00 & 1.887 (0.006) \\
\midrule
(d) nonlinear-interacted & TWFE-OLS & -0.082 (0.005) & 0.107 & 0.98 & -- \\
 & Callaway--Sant'Anna & -0.005 (0.003) & 0.041 & 1.00 & -- \\
 & naive CFFE & -0.082 (0.005) & 0.107 & 0.98 & 1.061 (0.007) \\
 & MLDID & -0.003 (0.010) & 0.147 & 0.97 & 1.308 (0.009) \\
 & staggered CFFE & -0.005 (0.003) & 0.041 & 1.00 & 0.837 (0.005) \\
\bottomrule
\end{tabular}
\end{table}

Table~\ref{tab:mc} reports the results. Three patterns stand out. First, on the
overall effect, TWFE and the pooled forest carry a bias of \mcTwfeBiasC{} with
\emph{zero} coverage under timing heterogeneity (DGP c), whereas CS and the
staggered forest are unbiased with close-to-nominal coverage; under the other DGPs
all estimators are approximately unbiased.\footnote{The staggered forest's scalar
point estimate equals the CS aggregate by construction, but its interval comes
from the unit-level block bootstrap rather than the CS multiplier bootstrap, so
the two need not have identical coverage in finite samples; in DGP~(a) the
block-bootstrap interval covers at 0.85 against CS's 0.96, reflecting harder
bootstrap calibration on the small two-period blocks. Coverage is at or near
nominal for both in the other designs.} This is the forbidden-comparison result:
the staggered forest is the only forest-based estimator that survives timing
heterogeneity on the aggregate. Second, and this is where the method earns its
keep, on the conditional surface under timing heterogeneity (DGP c) the staggered
forest attains a CATE-RMSE of \mcStagCateC{} against the pooled forest's
\mcNaiveCateC{}---a gap of many Monte Carlo standard errors that the honesty
sweep in Table~\ref{tab:honesty} shows is present at every honesty setting. This
is where the clean-block construction earns its keep: the pooled forest's
node-level pooling lets already-treated units contaminate the control group, and
no honesty setting repairs that. The staggered forest also holds the lowest
CATE-RMSE under homogeneous effects (DGP a) and improves on the pooled forest
under smooth covariate heterogeneity (DGP b). Third, the staggered forest and CS
coincide on the scalar in every DGP, as the construction implies; the forest's
added value is the conditional surface, which CS does not provide.

\subsection{Head-to-head with MLDID}
The fifth estimator, MLDID \citep{hatamyar2023mldid}, shares our estimand but
uses the alternative engine: the \citet{lu2019estimating} doubly-robust score
with machine-learning nuisance functions rather than within-block fixed-effects
residualization. Because it too uses not-yet-treated controls, MLDID avoids the
large forbidden-comparison bias of TWFE and the pooled forest: its DGP~(c) bias
is \mcMldidBiasC{} rather than \mcTwfeBiasC{}, which confirms that both engines solve
the staggered problem. The tradeoff shows up in the comparison between the two.
Under timing heterogeneity (DGP~c) the fixed-effects forest does better on every
axis: lower bias (\mcStagBiasC{} versus \mcMldidBiasC{}),
valid versus poor coverage (\mcStagCovC{} versus \mcMldidCovC{}), and a marginally
better conditional surface (\mcStagCateC{} versus \mcMldidCateC{}). Under \emph{nonlinear} interacted
heterogeneity (DGP~d) the forest-based estimators recover the surface better than
MLDID's penalized-linear learner---CATE-RMSE \mcStagCateD{} for the staggered
forest and \mcNaiveCateD{} for the pooled forest, both below MLDID's
\mcMldidCateD{}---as a tree ensemble should against a linear final stage when the
truth is nonlinear. Here the two forests are close to each other; the separation
is between the forest engines and the linear one, which is the point of the DGP.
The one place the ordering reverses is
pure covariate heterogeneity with a truly linear surface (DGP~b): there MLDID's
penalized-linear final stage fits the linear-in-$x$ effect more efficiently
(CATE-RMSE \mcMldidCateB{} versus our \mcStagCateB{}), while our estimator retains
better coverage. MLDID also under-covers throughout, reflecting the greater
difficulty of calibrated inference for the doubly-robust score relative to the
within-transformation bootstrap. In short, the two engines are alternatives: ours
is preferable when effects vary with adoption timing, when the surface is
nonlinear, and when calibrated coverage matters; the doubly-robust engine is more
efficient only when the conditional surface is smooth and linear and timing is
benign.\footnote{We benchmark against a faithful Python reimplementation of the
Lu--Nie--Wager score underlying MLDID, using random-forest nuisance functions. We
attempted to benchmark against the authors' original \texttt{MLDID} R package
directly; it installs but errors at runtime in current R even on its own bundled
example, apparently a version incompatibility with its CVXR solver dependency, so
we could not use it. Our port omits the minimax reweighting of
\citet{hatamyar2023mldid}, which targets the scalar aggregate; the small positive
scalar bias MLDID shows in the table is attributable to this omission. The
comparison in this paper turns on the \emph{conditional surface} (CATE-RMSE), not
the scalar, so the omission does not affect the head-to-head we draw; and our own
estimator's scalar equals the Callaway--Sant'Anna aggregate by construction,
independently of MLDID.}

\subsection{Sensitivity to the honesty regime}
Because honest sample-splitting trades bias for variance and the two-period
staggered blocks are small, one might worry that the surface comparison hinges on
whether honesty is on. Table~\ref{tab:honesty} settles this by running both
forests under the same honesty regime---forced on, forced off, and adaptive at
several block-size thresholds---and reporting the CATE-RMSE at each. Under timing
heterogeneity (DGP~c) the staggered forest beats the pooled forest by many Monte
Carlo standard errors at \emph{every} regime; the clean-block construction, not
the honesty setting, is what delivers the advantage. Under the nonlinear DGP~(d)
the two forests are close to one another at every regime---consistent with the
head-to-head above, where the meaningful separation is between the forest engines
and MLDID's linear final stage rather than between the two forests.

\begin{table}[htbp]
\centering\footnotesize
\caption{Sensitivity of the conditional-surface error (CATE-RMSE) to the honest-splitting regime, under timing heterogeneity (DGP~c) and nonlinear heterogeneity (DGP~d), 200 replications. Both forests are run under the \emph{same} regime at each row, so the comparison is like-for-like. Monte Carlo standard errors in parentheses. Under DGP~(c) the staggered forest beats the pooled forest by a wide margin at every setting: the clean-block construction, not the honesty knob, drives the advantage.}
\label{tab:honesty}
\begin{tabular}{llrr}
\toprule
DGP & Honesty regime & Staggered CATE-RMSE & Pooled CATE-RMSE \\
\midrule
(c) heterogeneous in timing & honest (forced on) & 1.947 (0.006) & 2.342 (0.009) \\
 & non-honest (forced off) & 1.887 (0.006) & 2.209 (0.009) \\
 & adaptive, threshold 50 & 1.947 (0.006) & 2.342 (0.009) \\
 & adaptive, threshold 100 & 1.947 (0.006) & 2.342 (0.009) \\
 & adaptive, threshold 200 & 1.936 (0.007) & 2.342 (0.009) \\
 & adaptive, threshold 400 & 1.887 (0.006) & 2.342 (0.009) \\
\midrule
(d) nonlinear-interacted & honest (forced on) & 1.179 (0.006) & 1.061 (0.007) \\
 & non-honest (forced off) & 0.837 (0.005) & 0.808 (0.007) \\
 & adaptive, threshold 50 & 1.179 (0.006) & 1.061 (0.007) \\
 & adaptive, threshold 100 & 1.179 (0.006) & 1.061 (0.007) \\
 & adaptive, threshold 200 & 1.076 (0.006) & 1.061 (0.007) \\
 & adaptive, threshold 400 & 0.837 (0.005) & 1.061 (0.007) \\
\bottomrule
\end{tabular}
\end{table}

\section{Applications}
\label{sec:app}

\subsection{Validation: the minimum-wage panel}

As a validation on a familiar dataset, we apply the estimator to the county
minimum-wage / teen-employment panel of \citet{callaway2021difference} (500
counties, 2003--2007; cohorts adopting in 2004, 2006, and 2007, with 309
never-treated counties; a single covariate, log county population). The estimator
recovers the group-time pattern reported by CS: the 2004 cohort's effect deepens
with exposure (about $-0.02$, $-0.08$, $-0.14$ at exposures 0--2), and the
overall ATT is \mpATT{} (95\% CI $[\mpATTlo, \mpATThi]$), a small negative effect
of the minimum wage on log teen employment, matching the sign and magnitude of
the published estimates (Table~\ref{tab:mpdta}). This dataset has too thin a
covariate set to show conditional heterogeneity, which is why we turn to a richer
panel.

\begin{table}[htbp]
\centering
\caption{Validation on the \citet{callaway2021difference} minimum-wage county panel: group-time effects $\widehat{\mathrm{ATT}}(g,t)$ on log teen employment, staggered fixed-effects causal forest. Recovers the published pattern (effects deepen with exposure); overall ATT -0.040 (95\% CI $[-0.064,\ -0.015]$).}
\label{tab:mpdta}
\begin{tabular}{ccr}
\toprule
Cohort $g$ & Period $t$ & $\widehat{\mathrm{ATT}}(g,t)$ \\
\midrule
2004 & 2004 & -0.019 \\
2004 & 2005 & -0.078 \\
2004 & 2006 & -0.136 \\
2004 & 2007 & -0.101 \\
2006 & 2006 & +0.005 \\
2006 & 2007 & -0.041 \\
2007 & 2007 & -0.026 \\
\bottomrule
\end{tabular}
\end{table}

\subsection{The ACA Medicaid expansion}

Our substantive application is the staggered, county-level rollout of the
Affordable Care Act's Medicaid expansion. Because states expanded at different
dates (a large 2014 wave, then smaller cohorts through 2023) and ten states have
not expanded, the design is a clean staggered adoption with a large never-treated
control group. We assemble a county-year panel, 2008--2023, from the Census
Small Area Health Insurance Estimates (SAHIE): the outcome is the county
uninsured rate; treatment is the state expansion, coded absorbing under a
majority-year rule (a state is treated in the first calendar year for which its
expansion was effective at least six months); and the time-invariant
pre-treatment covariates are the county's 2013 poverty rate, its 2013 log median
household income (both from the Census Small Area Income and Poverty Estimates,
SAIPE), and census region. We deliberately condition on socioeconomic covariates
that are \emph{not} lags of the outcome: heterogeneity found along the poverty
rate or income cannot be an artifact of mechanical mean reversion in the uninsured
rate. The 2014 cohort alone contains about 1{,}180 counties and every cohort
exceeds 60, so, unlike a cross-country macro panel, cohorts are large enough for a
forest to learn within-cohort heterogeneity. Table~\ref{tab:covariates} lists the
covariates and Table~\ref{tab:descriptives} reports their 2013 distribution by
cohort.

\begin{table}[htbp]
\centering
\caption{Pre-treatment covariates used in the forest. All are time-invariant, measured before any cohort's expansion.}
\label{tab:covariates}
\begin{tabular}{lll}
\toprule
Covariate & Definition & Source \\
\midrule
Poverty rate & County all-ages poverty rate, 2013, \% & SAIPE (Census) \\
Log median income & Log of county median household income, 2013 & SAIPE (Census) \\
Census region & Four-category region (NE/MW/S/W) & Census \\
\bottomrule
\end{tabular}
\end{table}

\begin{table}[htbp]
\centering\footnotesize
\caption{Descriptive statistics by adoption cohort (county baseline, 2013). Each row is a Medicaid-expansion cohort (year the state's expansion took broad effect under the majority-year rule) or the never-treated group; columns report the county count and 2013 means (standard deviations) of the pre-treatment covariates.}
\label{tab:descriptives}
\begin{tabular}{lrccc}
\toprule
Cohort & Counties & Uninsured \% & Poverty \% & Median income (\$) \\
\midrule
never-treated & 1,070 & 20.3 (5.4) & 19.5 (6.9) & 42,850 \\
2014 & 1,180 & 15.1 (4.8) & 16.1 (6.1) & 48,395 \\
2015 & 169 & 14.3 (2.9) & 13.8 (3.8) & 48,980 \\
2016 & 149 & 22.4 (4.7) & 18.5 (6.9) & 45,235 \\
2019 & 150 & 15.5 (2.9) & 15.2 (6.0) & 50,868 \\
2020 & 73 & 20.1 (5.1) & 14.8 (4.0) & 48,163 \\
2021 & 170 & 18.4 (5.1) & 14.9 (4.7) & 45,495 \\
2022 & 115 & 17.6 (3.1) & 18.7 (5.3) & 40,468 \\
2023 & 66 & 15.0 (3.3) & 16.3 (10.5) & 46,219 \\
\midrule
All & 3,142 & 17.6 (5.5) & 17.2 (6.6) & 46,008 \\
\bottomrule
\end{tabular}
\end{table}

Treatment is assigned at the state level and inherited by counties. The number of
independent treatment switches is therefore the number of expanding states, not
the number of counties; the county cross-section buys resolution for the
conditional surface, not additional identifying variation. Accordingly we cluster
inference at the state level, resampling whole states in the block bootstrap.

\begin{table}[htbp]
\centering
\caption{ACA Medicaid expansion: staggered fixed-effects causal forest, county panel 2008--2023 (3,142 counties, 8 cohorts, 42 group-time blocks, 0 dropped). Effect on the county uninsured rate (percentage points); state-clustered block bootstrap. Covariates are the county poverty rate, log median household income, and census region (none a lag of the outcome).}
\label{tab:medicaid}
\begin{tabular}{lr}
\toprule
Quantity & Value \\
\midrule
Overall ATT (pp) & -2.25 \\
\quad state-clustered 95\% CI & $[-3.10,\ -1.12]$ \\
Event study, exposure 0 / 3 / 6 (pp) & -1.5 / -2.6 / -2.8 \\
Conditional surface $\hat\tau(x)$: mean / sd & -1.90 / 1.17 \\
$\mathrm{corr}(\hat\tau,\ \text{poverty rate})$ & -0.55 \\
$\mathrm{corr}(\hat\tau,\ \text{log median income})$ & +0.46 \\
Feature importance: poverty rate & 0.39 \\
\quad log median income & 0.42 \\
\quad region & 0.19 \\
\bottomrule
\end{tabular}
\end{table}

Table~\ref{tab:medicaid} summarizes the results, and Figure~\ref{fig:es} plots
the event study together with the placebo pre-trends. The overall effect is a
\medATT{} percentage-point reduction in the county uninsured rate (95\% CI
$[\medATTlo, \medATThi]$), within the range of the ACA-expansion
literature, and the event study builds from $-1.5$ at adoption to about $-2.8$ by
the sixth year before attenuating at long horizons where few cohorts and a
thinning control pool
remain. The object of interest, however, is the conditional surface: its spread
is substantial (standard deviation \medCateSd{} on a mean of \medCateMean{}), and
it correlates \medCorrPov{} with the county poverty rate and \medCorrInc{} with
log median household income. The forest's splits load on income (\medImpInc{}) and
poverty (\medImpPov{}), with region secondary (\medImpRegion{}). Poorer counties,
and lower-income counties, gained the most coverage from expansion. This is a form
of heterogeneity that the average $\mathrm{ATT}(g,t)$ cannot express and that the
method recovers directly.

Because these covariates are socioeconomic rather than lags of the uninsured
rate, the gradient is not an artifact of mechanical mean reversion: nothing about
conditioning on the poverty rate forces a county with a larger coverage gain. The
finding is also economically sensible---Medicaid expansion reaches more of the
uninsured where poverty is higher and incomes lower. The method's role is to
recover this gradient nonparametrically and jointly with the dynamics rather than
to impose a functional form.

\subsection{Pre-trends and robustness}
The placebo pre-trend test estimates group-time DiDs at pre-adoption periods,
which should be zero under Assumption~\ref{ass:pt}. It is clean in the periods
that matter (exposures $-2$ to $-4$, at $+0.04$ to $+0.20$ pp) but shows large,
trending placebo effects in the deep pre-period (exposures $\le -9$, i.e.\
2008--2012, up to $-1.98$), which reflect divergent uninsured-rate paths between
expansion and non-expansion states through the Great Recession. This does not
threaten identification. Because the estimator uses the universal base period
$g-1$, every cohort's reference year (2013 for the 2014 wave, later for later
cohorts) lies within a 2011--2023 window, so the recession-era years never enter
an estimation block; they appear only as placebo leads. Table~\ref{tab:robust}
makes this concrete: re-estimating on a trimmed 2011--2023 panel, which drops the
recession-era pre-period entirely, leaves the overall ATT and its interval
unchanged to three decimals, because no estimation block used those years.

We also report a \citet{rambachan2023more} relative-magnitudes sensitivity analysis.
Allowing post-treatment differential trends of up to $\bar M$ times the largest
near-pre-period placebo, the average post-expansion effect remains significantly
negative for every $\bar M$ we consider (through $\bar M = 5$): even if
unobserved trend violations after adoption were five times the largest violation
we observe just before it, the conclusion of a coverage gain would stand. The
effect is therefore robust to economically large departures from parallel trends.

\begin{table}[htbp]
\centering
\caption{Robustness of the Medicaid estimates to dropping the pre-period where parallel trends is violated. The full and trimmed panels give identical overall effects because the universal base period $g-1$ places every cohort's reference year within the trimmed window.}
\label{tab:robust}
\begin{tabular}{lccc}
\toprule
Panel window & Overall ATT (pp) & 95\% CI & Deep pre-trend ($e{=}-12$) \\
\midrule
2008--2023 (main) & -2.25 & $[-3.10,\ -1.12]$ & -0.52 \\
2011--2023 (trimmed) & -2.25 & $[-3.10,\ -1.12]$ & (dropped) \\
\bottomrule
\end{tabular}
\end{table}

\begin{figure}[htbp]
\centering
\includegraphics[width=0.72\textwidth]{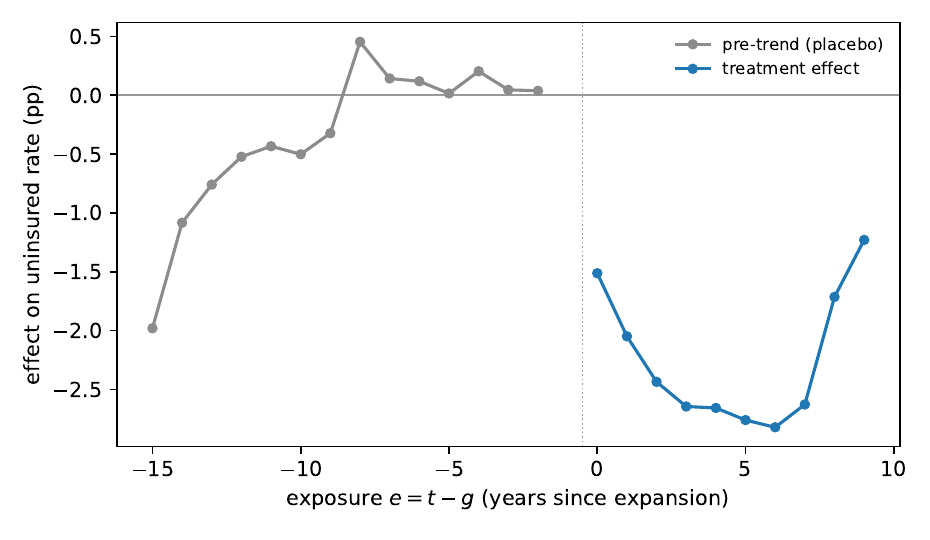}
\caption{Medicaid expansion: event study of the effect on the county uninsured
rate (percentage points) by exposure $e = t - g$. Post-treatment estimates
(right of the reference period) trace the coverage gain; the placebo pre-trend
leads (left) are near zero close to adoption and diverge only in the deep
pre-period, which never enters an estimation block.}
\label{fig:es}
\end{figure}

\section{Conclusion}
\label{sec:conc}

This paper has proposed a fixed-effects causal forest for staggered-adoption
difference-in-differences that delivers covariate-conditional group-time
treatment effects $\tau_{g,t}(x)$. The estimand is shared with recent work
\citep{hatamyar2023mldid, imai2023doubly}. The contribution is the estimation
engine: node-level, within-block two-way fixed-effects residualization inside
honest causal trees, carried into the Callaway--Sant'Anna group-time structure.
We provide Monte Carlo evidence that the estimator stays unbiased and correctly
covered under timing heterogeneity, where two-way fixed effects and a pooled
forest fail, and an application in which it recovers interpretable heterogeneity
in the effect of Medicaid expansion on insurance coverage.

Some limitations remain and point to natural next steps. The consistency result
(Proposition~\ref{prop:consistency}) is block-wise, and we establish the
forbidden-comparison bias of the pooled alternative by simulation rather than in
closed form. Inference for the conditional surface rests on a block bootstrap
(Section~\ref{sec:inference}); sharpening it to the uniform bands
\citet{imai2023doubly} derive for one covariate is a natural extension to the
multivariate forest. The method is a complement to existing staggered-DiD tools:
when the analyst wants to know not just how large the average effect is but for
whom, and is willing to maintain parallel trends rather than unconfoundedness, it
delivers the conditional object within the identification the literature already
uses.

\section*{Data and code availability}
The staggered estimator is implemented in the \texttt{causalfe} Python package.
The functionality this paper introduces is not yet part of a tagged public
release; the replication archive bundles the exact package source used for these
results and installs it locally, so the archive reproduces the paper without
depending on a future release. The archive reproduces every number, table, and
figure from committed inputs: each table and macro is generated by a single script
(\texttt{generate\_paper\_artifacts.py}) from the Monte Carlo output and the
application result files, and a one-command driver (\texttt{reproduce.sh}) either
runs the full pipeline (the Monte Carlo including the MLDID benchmark, the
\citet{callaway2021difference} minimum-wage validation, and the Medicaid
application with its scripted, key-free assembly of the SAHIE county panel from
public Census files) or rebuilds the paper from committed results alone.
All computation is seeded, so results are identical across runs and independent
of the degree of parallelism. The estimator package is installed from the bundled
source (\texttt{pip install -e causalfe}); other dependencies are pinned in
\texttt{requirements.txt}; see \texttt{REPLICATION.md} for the full mapping from
scripts to paper objects.

\bibliographystyle{plainnat}
\bibliography{staggered_cffe}

\begin{thebibliography}{12}
\providecommand{\natexlab}[1]{#1}
\providecommand{\url}[1]{\texttt{#1}}
\expandafter\ifx\csname urlstyle\endcsname\relax
  \providecommand{\doi}[1]{doi: #1}\else
  \providecommand{\doi}{doi: \begingroup \urlstyle{rm}\Url}\fi

\bibitem[Athey et~al.(2019)Athey, Tibshirani, and Wager]{athey2019generalized}
Susan Athey, Julie Tibshirani, and Stefan Wager.
\newblock Generalized random forests.
\newblock \emph{The Annals of Statistics}, 47\penalty0 (2):\penalty0
  1148--1178, 2019.

\bibitem[Callaway and Sant'Anna(2021)]{callaway2021difference}
Brantly Callaway and Pedro H.~C. Sant'Anna.
\newblock Difference-in-differences with multiple time periods.
\newblock \emph{Journal of Econometrics}, 225\penalty0 (2):\penalty0 200--230,
  2021.

\bibitem[de~Chaisemartin and D'Haultf{\oe}uille(2020)]{dechaisemartin2020two}
Cl{\'e}ment de~Chaisemartin and Xavier D'Haultf{\oe}uille.
\newblock Two-way fixed effects estimators with heterogeneous treatment
  effects.
\newblock \emph{American Economic Review}, 110\penalty0 (9):\penalty0
  2964--2996, 2020.

\bibitem[Gavrilova et~al.(2025)Gavrilova, Lang{\o}rgen, and
  Zoutman]{gavrilova2025did}
Evelina Gavrilova, Audun Lang{\o}rgen, and Floris~T. Zoutman.
\newblock Difference-in-difference causal forests, with an application to
  payroll tax incidence in norway.
\newblock \emph{Journal of Applied Econometrics}, 40\penalty0 (7):\penalty0
  727--740, 2025.

\bibitem[Goodman-Bacon(2021)]{goodman2021difference}
Andrew Goodman-Bacon.
\newblock Difference-in-differences with variation in treatment timing.
\newblock \emph{Journal of Econometrics}, 225\penalty0 (2):\penalty0 254--277,
  2021.

\bibitem[Hatamyar et~al.(2023)Hatamyar, Kreif, Rocha, and
  Huber]{hatamyar2023mldid}
Julia Hatamyar, Noemi Kreif, Rudi Rocha, and Martin Huber.
\newblock Machine learning for staggered difference-in-differences and dynamic
  treatment effect heterogeneity.
\newblock \emph{arXiv preprint arXiv:2310.11962}, 2023.

\bibitem[Imai et~al.(2023)Imai, Qin, and Yanagi]{imai2023doubly}
Shunsuke Imai, Lei Qin, and Takahide Yanagi.
\newblock Doubly robust uniform confidence bands for group-time conditional
  average treatment effects in difference-in-differences.
\newblock \emph{arXiv preprint arXiv:2305.02185}, 2023.

\bibitem[Kattenberg et~al.(2023)Kattenberg, Scheer, and
  Thiel]{kattenberg2023causal}
Mark Kattenberg, Bas Scheer, and Jurre Thiel.
\newblock Causal forests with fixed effects for treatment effect heterogeneity
  in difference-in-differences.
\newblock CPB Discussion Paper 452, CPB Netherlands Bureau for Economic Policy
  Analysis, 2023.

\bibitem[Lu et~al.(2019)Lu, Nie, and Wager]{lu2019estimating}
Xiaomeng Lu, Xinkun Nie, and Stefan Wager.
\newblock Estimating individual treatment effects in difference-in-differences
  frameworks.
\newblock \emph{arXiv preprint arXiv:1902.09166}, 2019.

\bibitem[Rambachan and Roth(2023)]{rambachan2023more}
Ashesh Rambachan and Jonathan Roth.
\newblock A more credible approach to parallel trends.
\newblock \emph{Review of Economic Studies}, 90\penalty0 (5):\penalty0
  2555--2591, 2023.

\bibitem[Sun and Abraham(2021)]{sun2021estimating}
Liyang Sun and Sarah Abraham.
\newblock Estimating dynamic treatment effects in event studies with
  heterogeneous treatment effects.
\newblock \emph{Journal of Econometrics}, 225\penalty0 (2):\penalty0 175--199,
  2021.

\bibitem[Wager and Athey(2018)]{wager2018estimation}
Stefan Wager and Susan Athey.
\newblock Estimation and inference of heterogeneous treatment effects using
  random forests.
\newblock \emph{Journal of the American Statistical Association}, 113\penalty0
  (523):\penalty0 1228--1242, 2018.

\end{thebibliography}

\end{document}